\documentclass[letter, 10pt, journal]{ieeetran}  
\IEEEoverridecommandlockouts                              
\usepackage{color}
\usepackage{psfrag}
\usepackage{balance}
\usepackage[hidelinks]{hyperref}
\usepackage{mathrsfs}
\usepackage{graphicx}
\usepackage{amssymb,amsfonts}

\usepackage{amsthm}
\usepackage{mathalfa}
\usepackage{tikz}
\usetikzlibrary{calc}
\usepackage{color}
\usepackage{pgfplots}
\pgfplotsset{compat=newest}
\usetikzlibrary{plotmarks}
\usepgfplotslibrary{patchplots}
\usepackage{grffile}
\usepackage{amsmath}

\newenvironment{remark}
  {\pushQED{\qed}\rem}
  {\popQED\endrem}

\newtheorem{fact}{Fact}
\newtheorem{proposition}{Proposition}
\newtheorem{problem}{Problem}
\newtheorem{theorem}{Theorem}

\newtheorem{definition}{Definition}

\usepackage{graphicx}
\usepackage{mathrsfs}
\usepackage{amsmath,amssymb,amsfonts,yhmath}
\usepackage{caption}
\usepackage{subcaption}
\usepackage{fancyhdr}
\pagestyle{fancy}
\fancyhf{}

\chead{\source}

\usepackage{color}
\usepackage{tikz}
\usetikzlibrary{circuits.ee.IEC}
\usetikzlibrary{arrows}
\newcommand{\dz}{\operatorname{dz}}
\newcommand{\Xz}{\partial_z x}
\newcommand{\Xhatz}{\partial_z \hat{x}}
\newcommand{\Xhatt}{\partial_t \hat{x}}
\newcommand{\Xt}{\partial_t x}

\newcommand{\Sy}{\mathsf{S}}
\newcommand{\Di}{\mathsf{D}}

\newcommand{\Dpn}{\Di_+^{n_x}}

\newcommand{\R}{\mathfrak{R}}

\newcommand{\Z}{\mathbb{Z}}
\newcommand*{\tr}{^{\mkern-1.5mu\mathsf{T}}}

\newcommand{\dom}{\operatorname{dom}}
\newcommand{\Sign}{\operatorname{sign}}
\newcommand{\Id}{\mathbf{I}}

\newcommand{\He}{\operatorname{He}}

\newcommand{\Int}{\operatorname{int}}

\newcommand{\0}{{\bf 0}}

\newcommand{\Hone}{\mathcal{H}^1(0,1;\R^{n_x})}
\newcommand{\Ltwo}{\mathcal{L}^2(0,1; \R^{n_x})}

\newcommand{\ep}{{\varepsilon}}

\newcommand{\source}{{THIS IS A PREPRINT VERSION. IF YOU FOUND THIS READING USEFUL FOR YOUR RESEARCH PLEASE CITE THE PUBLISHED VERSION DOI: \href{https://doi.org/10.23919/ACC45564.2020.9147240}{https://doi.org/10.23919/ACC45564.2020.9147240}}}

\usepackage{mathtools}

\makeatletter
\makeatother

\makeatletter
\def\ps@IEEEtitlepagestyle{}
\title{\LARGE \bf Observer Design for Systems of  Conservation Laws with Lipschitz Nonlinear Boundary Dynamics}\author{Francesco~Ferrante and Andrea~Cristofaro
	\thanks{F. Ferrante is with Univ. Grenoble Alpes, CNRS, Grenoble INP, GIPSA-lab, 38000 Grenoble, France. Email: francesco.ferrante@gipsa-lab.fr.}
\thanks{A. Cristofaro is with Department of Computer, Control and Management Engineering, Sapienza University of Rome, 00185 Rome, Italy. Email: cristofaro@diag.uniroma1.it}
\thanks{Research by F. Ferrante has been partially supported by the Grenoble Institute of Technology under the grant CrYStAL.}
\thanks{\textcolor{blue}{In this file, we propose a clearer proof sketch for Proposition 1 that clarifies some ambiguities in the published version. \textbf{The result is not affected}. In addition, we fixed some typos.} 
}
}
\begin{document}
\maketitle
\begin{abstract}
The problem of state estimation for a system of coupled hyperbolic PDEs and ODEs with Lipschitz nonlinearities with boundary measurements is considered. An infinite dimensional observer with a linear boundary injection term is used to solve the state estimation problem. The interconnection of the observer and the system is written in estimation error coordinates and analyzed as an abstract dynamical system. The observer is designed to achieve global exponential stability of estimation error  with respect to a 
suitable norm. Sufficient conditions in the form of matrix inequalities are proposed  to design the observer. Numerical simulations support and corroborate the theoretical results.
\end{abstract}

\section{Introduction}
Many distributed physical phenomena can be described through mathematical models based on hyperbolic partial differential equations and conservation laws. Examples of such distributed systems can be found in hydraulic networks \cite{dos2008boundary}, multiphase flow \cite{di2011slugging}, transmission networks \cite{fliess1999active}, road traffic networks \cite{hante2009modeling} or gas flow in pipelines \cite{dick2010classical}. The interest in boundary controllers and boundary observers is motivated by the fact that systems governed by partial differential equations are typically equipped with sensors located at the boundary of the spatial domain, while state variables are not directly measured in the interior. In this regard, conditions for stability, controllability and
observability of first-order hyperbolic systems have been largely explored; see for instance \cite{coron2007strict,krstic2008backstepping,prieur2008robust,tucsnak2009observation,li2010controllability} and the references therein. More specifically,  observer design for linearized first-order hyperbolic systems based on Lyapunov methods has been addressed in \cite{aamo2006observer}, where exponential convergence is guaranteed using boundary injections. For quasilinear first-order hyperbolic systems, the tasks of boundary stabilization and state estimation have been considered in \cite{coron2013local}.

When dynamic boundary conditions are present, a coupled system of PDEs and ODEs arises, this making the structure of the problem particularly rich and interesting. Systems modeled as the interconnection of PDEs and ODEs can be found in numerous applications; see, e.g., \cite[Chapter 1]{bastin:coron:book:2016}, \cite{de2018backstepping,roman2019robustness}.
An interesting approach for stability analysis of the interconnection of a linear ODE and a wave equation is presented in \cite{barreau2018lyapunov} where cross-terms defined through supplementary integral states are introduced. Along the same lines, in \cite{trinh2017design} non-diagonal Lyapunov functionals are considered for stability analysis of coupled systems of scalar PDEs and ODEs. 
The design of boundary observer for linear and quasilinear conservation laws with static and asymptotically stable dynamic boundary control is proposed in \cite{castillo2013boundary}, where the authors state sufficient conditions based on matrix inequalities.  A design procedure for backstepping observers is proposed in \cite{hasan2016boundary}, where the solution of an auxiliary set of PDEs is use to determine a suitable change of coordinates. The problem of designing high-gain observers for hyperbolic systems of balance laws, yet with distributed in-domain measurements, has been addressed in \cite{kitsos2018high,kitsos2019high}.

Recently, in \cite{ferrante2019boundary}, we proposed a design technique for boundary observers for linear hyperbolic systems with lower-order reaction terms and potentially unstable dynamic boundary conditions based on matrix inequalities. In this paper we
extend the constructions in \cite{ferrante2019boundary} to design an exponentially convergent observer for a system of linear conservation laws with Lipschitz nonlinear boundary dynamics; a similar setting is considered in \cite{ahmed2018pde}. The class of Lipschitz functions is recognized to include some of the most common nonlinearities in control systems such as saturations or dead-zones, this motivating the interest in such a generalized setup. The design of asymptotic observers for finite-dimensional Lipschitz nonlinear systems via the use of matrix inequalities is a well-established research area; see \cite{zemouche2013lmi, alessandri2004design,arcak2001nonlinear} just to mention a few. To address the state estimation problem considered in this paper, we consider an infinite-dimensional observer that is a copy of the system with a linear boundary injection term that needs to be designed. Then, using an integral Lyapunov functional and some upper growth bounds for the nonlinearity, we provide sufficient conditions in the form of matrix inequalities to enforce exponential decay of the state estimation error. 

The remainder of the paper is organized as follows.  Section~\ref{sec:Prob_Statement} describes the considered setup and establishes some preliminary results on the estimation error dynamics. 
Section~\ref{sec:Main} pertains to stability analysis of the error dynamics and provides sufficient conditions for observer design. Section~\ref{sec:Num_example} illustrates the application of the proposed observer in a numerical example. Finally, some conclusions are reported in Section \ref{sec:Conclusions}.
\subsection*{{\bf Notation}}
The sets $\R_{\geq 0}$ and $\R_{>0}$ represent the set of nonnegative and positive real scalars, respectively. The symbols $\Sy^n$ ($\Sy_+^n$) and $\Di_+^n$ denote, respectively, the set of real $n\times n$ symmetric (symmetric positive definite) matrices and the set of diagonal positive definite matrices. For a matrix $A\in\R^{n\times m}$, $A\tr$ denotes the transpose of $A$, and $\He(A)= A+A\tr$. For a symmetric matrix $A$, positive definiteness (negative definiteness) and positive semidefiniteness (negative semidefiniteness) are denoted, respectively, by $A\succ 0$ ($A\prec 0$) and $A\succeq 0$ ($A\preceq 0$). 
In partitioned symmetric matrices, the symbol $\bullet$ stands for symmetric blocks. The symbol $\Id$ denotes the identity matrix. Let $X\subset\R^n$, 
$Y\subset\R$, $x\in X$, and $f\colon X \rightarrow Y$, the symbol 
$\nabla f(x)$ denotes the gradient of $f$ at $x$.
For a vector $x\in\R^n$, $\vert x \vert$ denotes its Euclidean norm. Given $x, y\in\R^n$, we denote by $\langle x, y \rangle_{\R^n}$ the standard Euclidean inner product.
Let $X$ and $Y$ be linear normed spaces and $f\colon X\rightarrow Y$, we denote by $Df(x)$ the Fr\'echet derivative of $f$ at $x$. Let $k=0, 1,\dots $, the symbol $\mathcal{C}^{k}(X, Y)$ denotes the set of class $\mathcal{C}^{k}$ functions $f\colon X\rightarrow Y$.
Let $U\subset\R$ and $V\subset\R^n$, and $f, g\colon U\rightarrow V$. 
We denote by $\langle f, g\rangle_{\mathcal{L}^2(U, V)}=\int_U f(x)g(x) dx$, and
$\Vert f\Vert_{\mathcal{L}^2(U, V)}=\sqrt{\langle f, f\rangle_{\mathcal{L}^2(U, V)}}$. Let $U\subset\R$ be open and $V$ be a linear normed space,  
$$
\begin{aligned}
\!\mathcal{H}^1(U;V)\!\coloneqq\!&\left\{\!f\!\in\!\mathcal{L}^2(U;V)\colon\!\!f\,\text{is absolutely continuous on}\,\,U, \right.\\
&\left.\frac{d}{dz} f\in\mathcal{L}^2(U;V)\right\}
\end{aligned}
$$
where $\frac{d}{dz}$ stands for the weak derivative of $f$. Let $X$ be a normed space, $\mathscr{A}\subset X$ be closed, and $x\in X$, the distance of $x$ to $\mathscr{A}$  is defined as $\vert x\vert_{\mathscr{A}}\coloneqq\inf_{y\in\mathscr{A}}\Vert x-y\Vert_X$.
\section{Problem Formulation and Outline of the Solution}
\label{sec:Prob_Statement} 
Let $\Omega\coloneqq(0,1)$, we consider a system of $n_x$ linear $1$-D hyperbolic PDEs with dynamic boundary conditions formally written as:
\begin{equation}
\label{eq:hyp_PDEs_3}
\def\arraystretch{1.2}
\begin{array}{llll}
&\displaystyle{\Xt(t,z)+\Lambda \Xz(t,z)=0}&(t,z)\in\R_{>0}\!\times\Omega\\
&x(t, 0) = C\chi(t)&\forall t\in\R_{>0}\\
&\dot{\chi}(t)= A\chi(t)+B\Psi(Z\chi(t))&\forall t\in\R_{>0}\\
&y(t)=Mx(t, 1)&\forall t\in\R_{>0}
\end{array}
\end{equation}
where  $\Xt$ and $\Xz$ denote, respectively, the derivative of $x$ with respect to ``time'' and the ``spatial'' variable $z$, $(z\mapsto x(\cdot, z), \chi)\in\Ltwo\times\R^{n_\chi}$ is the system state, $y\in\R^{n_y}$ is a measured output, and $\Psi\colon\R^{n_z}\rightarrow\R^{n_l}$ is globally 
Lipschitz continuous with Lipschitz constant $\ell>0$. Namely, for all $z_1, z_2\in\R^{n_z}$ one has
\begin{equation}
\vert\Psi(z_1)-\Psi(z_2)\vert\leq \ell \vert z_1-z_2\vert
\end{equation}
Matrices   
$\Lambda\in\Dpn$, $B\in\R^{n_\chi\times n_l}$,  $C\in\R^{n_x\times n_\chi}, A\in\R^{n_\chi\times n_\chi}$,  
$Z\in\R^{n_z\times n_\chi}$, and $M\in\R^{n_y\times n_x}$ are known. 
Our goal is to design an observer providing an exponentially converging estimate $(\hat{x}(\cdot, z), \hat{\chi})$ of the system state $(x(\cdot, z), \chi)$ from any initial condition.
Inspired by \cite{castillo2013boundary, ferrante2019boundary}, we consider the following observer with state $(\hat{x}, \hat{\chi})\in\Ltwo\times\R^{n_\chi}$:
\begin{equation}
\label{eq:observer}
\def\arraystretch{1.2}
\begin{array}{llll}
&\displaystyle{\Xhatt(t,z)+\Lambda \Xhatz(t,z)=0}\\
&\hat{x}(t, 0) = C\hat{\chi}(t)\\
&\dot{\hat{\chi}}(t)= A\hat{\chi}(t)+B\Psi(Z\hat{\chi}(t))+L(y(t)-M\hat{x}(t, 1))\\
&&\hspace{-1.8cm}\!\!(t,z)\!\in\!\R_{>0}\!\!\times\!\Omega\\
\end{array}
\end{equation}
where $L\in\R^{n_\chi\times n_y}$ is the observer gain to be designed. At this stage, define the following invertible change of variables
$$
\begin{aligned}
&\ep\coloneqq x-\hat{x}\\
&\eta\coloneqq \chi-\hat{\chi}
\end{aligned}
$$ 
which defines the two estimation errors. Then, the interconnection of 
observer \eqref{eq:observer} and plant \eqref{eq:hyp_PDEs_3} reads as follows
\begin{equation}
\label{eq:CL}
\def\arraystretch{1.2}
\begin{array}{llll}
&\displaystyle{\Xt(t,z)+\Lambda \Xz(t,z)=0}&\\
&x(t, 0) = C\chi(t)\\
&\dot{\chi}(t)= A\chi(t)+B\Psi(Z\chi(t))\\
&\displaystyle{\partial_t \ep(t, z)+\Lambda \partial_z \ep(t, z)=0}\\
&\ep(t, 0) = C\eta(t)\\
&\dot{\eta}(t)= A\eta-L M\ep(t, 1)+B\rho(\chi(t), \eta(t))\\
&&\hspace{-1.5cm}\!\!(t, z)\!\in\!\R_{>0}\!\!\times\!\Omega
\end{array}
\end{equation}
where for all $(\chi, \eta)\in\R^{2n_\chi}$:
\begin{equation}
\label{eq:rhodef}
\rho(\chi, \eta)\coloneqq \Psi(Z\chi)-\Psi(Z(\chi-\eta))
\end{equation}
In view of the global Lipschitzness assumption on $\Psi$,  it easily follows that for all $(\chi, \eta)\in\R^{2n_\chi}$
\begin{equation}
\label{eq:rhoBound}
\rho(\chi, \eta)\tr \rho(\chi, \eta)-\ell^2 \eta\tr Z\tr Z\eta\leq 0
\end{equation}
\subsection{Abstract Formulation}
Similarly as in \cite{bastin:coron:book:2016}, in this paper, we focus on mild solutions to \eqref{eq:CL}. To this end, as in \cite{CurtainZwart:95,arendt2011vector, jacob2015c}, we reformulate \eqref{eq:CL} as an abstract differential equation on the Hilbert space $\mathcal{Z}\coloneqq(\Ltwo\times\R^{n_\chi})^2$ 
endowed with the following standard inner product:
$$
\langle (a_1, a_2), (b_1, b_2)\rangle_{\mathcal{Z}}\coloneqq\langle a_1, a_2\rangle_{\sqrt{\mathcal{Z}}}+\langle a_2, b_2\rangle_{\sqrt{\mathcal{Z}}}
$$
where for all $a_1=(a_1^x, a_1^\chi), a_2=(a_2^x, a_2^\chi)\in\Ltwo\times\R^{n_\chi}$ we define: 
$$
\langle a_1, a_2\rangle_{\sqrt{\mathcal{Z}}}\coloneqq\langle a_1^x, a_2^x\rangle_{\Ltwo}+\langle a_1^\chi, a_2^\chi\rangle_{\R^{n_{\chi}}}
$$
In particular, let $\mathcal{S}\coloneqq\Hone\times\R^{n_\chi}$
and $\mathcal{X}\coloneqq\mathcal{S}^2\subset\mathcal{Z}$.  
Define
$$
\begin{aligned}
&\mathcal{D}_1\coloneqq\left\{(x, \chi)\in\mathcal{S}\colon x(0)= C\chi\right\}\\
&\mathcal{D}_2\coloneqq\left\{(\ep, \eta)\in\mathcal{S}\colon \ep(0)=C\eta\right\}
\end{aligned}
$$
and consider the following operator
\begin{equation}
\begin{aligned}
\mathcal{F}\colon&\dom\mathcal{F}\rightarrow\mathcal{Z}\\
&(x, \chi, \ep, \eta)\mapsto \begin{pmatrix}
\mathcal{A}_P&0\\
0&\mathcal{A}_O
\end{pmatrix} \begin{pmatrix}
x\\
\chi\\
\varepsilon\\
\eta
\end{pmatrix}
\end{aligned}
\label{eq:operator}
\end{equation}
for which $\dom\mathcal{F}=\mathcal{D}_1\times \mathcal{D}_2$, $\dom\mathcal{A}_P=\mathcal{D}_1$, $\dom\mathcal{A}_O=\mathcal{D}_2$, and
$$
\begin{aligned}
&\mathcal{A}_P\begin{pmatrix}
x\\
\chi
\end{pmatrix}\coloneqq\begin{pmatrix}
-\Lambda\frac{d}{dz}&0\\
0&A
\end{pmatrix}\begin{pmatrix}
x\\
\chi
\end{pmatrix}\\
&\mathcal{A}_O\begin{pmatrix}
\ep\\
\eta
\end{pmatrix}\coloneqq\begin{pmatrix}
-\Lambda\frac{d}{dz}&0\\
0&A
\end{pmatrix}\begin{pmatrix}
\ep\\
\eta
\end{pmatrix}+\begin{pmatrix}
0\\
-L M\ep(1) 
\end{pmatrix}
\end{aligned}
$$
Let $f\colon\R^{2n_\chi}\rightarrow\dom\mathcal{F}$ be defined for all $\pi=(\chi, \eta)\in\R^{2n_\chi}$ as:
$$
f(\pi)\coloneqq\begin{pmatrix}
0\\
B\Psi(\chi)\\
0\\
B\rho(\chi, \eta)
\end{pmatrix}
$$
Then, the error dynamics can be formally written as the following abstract differential equation on the Hilbert space $\mathcal{Z}$:
\begin{equation}
\begin{pmatrix}
\dot{x}\\
\dot{\chi}\\
\dot{\varepsilon}\\
\dot{\eta}
\end{pmatrix}=\mathcal{F}\begin{pmatrix}
x\\
\chi\\
\varepsilon\\
\eta
\end{pmatrix}+f(\chi, \eta)
\label{eq:abstract}
\end{equation}
In particular, following the lines of \cite{arendt2011vector,pazy2012semigroups}, we consider the following notion of (mild) solution for \eqref{eq:abstract}:
\begin{definition}
\label{def:Mild}
Let $\mathcal{I}\subset\R_{\geq 0}$ be an interval containing zero. A function $\varphi=(\varphi_x, \varphi_\chi, \varphi_\varepsilon, \varphi_\eta)\in\mathcal{C}^0(\mathcal{I}, \mathcal{Z})$ is a solution to \eqref{eq:abstract} if for all $t\in\mathcal{I}$
\begin{equation}
\label{eq:MildSol}
\begin{aligned}
&\int_0^t\varphi(s)ds\in\dom\mathcal{F}\\
&\varphi(t)=\varphi(0)+\mathcal{F}\int_0^t\varphi(s)+\int_0^t f(\varphi_\chi(s), \varphi_\eta(s))ds
\end{aligned}
\end{equation}
Moreover, we say that $\varphi$ is maximal if its domain cannot be extended and complete if $\mathcal{I}=[0, +\infty)$. 
\hfill$\circ$
\end{definition}
\begin{remark}
In \cite{pazy2012semigroups}, mild solutions to nonlinear evolution equations are defined by relying on the semigroup generated by the linear operator underlying the evolution equation. In this paper, inspired by \cite{arendt2011vector}, we follow a different approach and consider a definition of (mild) solution that is formulated directly on the data of the system, i.e., the operator $\mathcal{F}$ and the function $f$. On the other hand, it can be shown that when $\mathcal{F}$ generates a $\mathcal{C}_0$-semigroup on $\mathcal{Z}$ and and $f$ is Lipshchitz continuous the two definitions are equivalent.
\end{remark}
\medskip
\subsection{Existence and uniqueness of solutions}
In this subsection, we illustrate existence and uniqueness of the solutions to \eqref{eq:abstract}. 
\begin{proposition}[Existence, uniqueness, and regularity of solutions]
\label{prop:exist}
Let $\zeta=(x_0, \chi_0, \varepsilon_0, \eta_0)\in\mathcal{Z}$. 
Then, there exists a unique maximal solution
$\varphi=(\varphi_x, \varphi_\chi, \varphi_\varepsilon, \varphi_\eta)$ to \eqref{eq:abstract} such that $\varphi(0)=\zeta$. Furthermore $\varphi$ is complete. In addition, if $\zeta\in\dom\mathcal{F}$, then $\varphi$ is a classical solution to \eqref{eq:abstract}, i.e:
\begin{equation}
\label{eq:classic}
\begin{aligned}
&\varphi\in\mathcal{C}^1(\R_{>0}, \mathcal{Z})\\
&\varphi(t)\in\dom\mathcal{F}&\qquad\forall t\in\R_{>0}\\
&\dot{\varphi}(t)=\mathcal{F}\varphi(t)+f(\varphi_\chi(t), \varphi_\eta(t))&\qquad\forall t\in\R_{>0}
\end{aligned}
\end{equation}
\end{proposition}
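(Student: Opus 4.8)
The plan is to recast \eqref{eq:abstract} as a semilinear Cauchy problem on $\cZ$ and to invoke the classical theory of Lipschitz perturbations of $\mathcal{C}_0$-semigroups; see, e.g., \cite{pazy2012semigroups,arendt2011vector}. Concretely, I would establish two facts: (i) the linear operator $\cF$ generates a strongly continuous semigroup $\{T(t)\}_{t\geq0}$ on $\cZ$; and (ii) the map $f$ is globally Lipschitz continuous as a map from $\cZ$ into $\cZ$. Granting (i) and (ii), a unique maximal mild solution of $\varphi(t)=T(t)\zeta+\int_0^t T(t-s)f(\varphi_\chi(s),\varphi_\eta(s))\,ds$ is produced by a contraction argument on $\mathcal{C}^0([0,\tau],\cZ)$ for $\tau$ small; the global Lipschitz bound yields, via Gr\"onwall, an a priori estimate of the form $\norm{\varphi(t)}_{\cZ}\leq(\norm{\zeta}_{\cZ}+c t)e^{c t}$ that rules out finite-time blow-up and thus forces $\cI=[0,+\infty)$, i.e.\ completeness. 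Finally, the equivalence of the two notions of solution recorded in the Remark above identifies this mild solution with the one of Definition~\ref{def:Mild}.

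For (i), since $\cF$ is block diagonal it suffices to show that $\cA_P$ and $\cA_O$ each generate a $\mathcal{C}_0$-semigroup on $\Ltwo\times\R^{n_\chi}$, which I would do via the Lumer--Phillips theorem after an exponential shift. For dissipativity, integration by parts gives, for $(x,\chi)\in\cD_1$,
\begin{equation*}
\langle\cA_P(x,\chi),(x,\chi)\rangle=\tfrac12 (C\chi)\tr\Lambda(C\chi)-\tfrac12 x(1)\tr\Lambda x(1)+\tfrac12\chi\tr\He(A)\chi,
\end{equation*}
where I used $x(0)=C\chi$ and $\Lambda\in\Dpn$. The outflow term at $z=1$ carries the right sign, while the remaining quadratic forms in $\chi$ are bounded by $\omega\norm{(x,\chi)}_{\cZ}^2$ for $\omega$ large enough; hence $\cA_P-\omega\Id$ is dissipative. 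The same computation for $\cA_O$ on $\cD_2$ produces, in addition, the cross term $-\eta\tr LM\ep(1)$, which I would absorb into the favourable outflow term $-\tfrac12\ep(1)\tr\Lambda\ep(1)$ by Young's inequality, at the price of enlarging $\omega$. For the range condition I would solve the resolvent equation $(\lambda\Id-\cA_P)(x,\chi)=(g_1,g_2)$ by first recovering $\chi$ from $(\lambda\Id-A)\chi=g_2$, invertible for $\lambda$ off $\spec(A)$, and then integrating $\Lambda x'=g_1-\lambda x$ in $z$ subject to $x(0)=C\chi$, which yields an element of $\Hone$; for $\cA_O$ the only new feature is that the nonlocal trace $\ep(1)$, expressed through the resolvent solve as an affine function of $\eta$, turns the $\eta$-equation into a finite-dimensional linear system that is uniquely solvable for $\lambda$ large. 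This generation step, and within it the handling of $\cA_O$, is where I expect the bulk of the work to lie: the dynamic condition $\ep(0)=C\eta$ together with the nonlocal injection $-LM\ep(1)$ makes both the dissipativity estimate and the surjectivity of $\lambda\Id-\cA_O$ genuinely nonstandard, since the boundary traces must be controlled directly rather than treated as bounded perturbations.

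For (ii), I would note that $f$ depends only on the finite-dimensional components $(\chi,\eta)$ and that its two nonzero entries are $B\Psi(Z\chi)$ and $B\rho(\chi,\eta)$. Global Lipschitzness of $\Psi$ with constant $\ell$, together with boundedness of the linear maps $Z$ and $B$, makes $\chi\mapsto B\Psi(Z\chi)$ Lipschitz; similarly, writing $\rho(\chi,\eta)=\Psi(Z\chi)-\Psi(Z(\chi-\eta))$ and using the bound \eqref{eq:rhoBound} shows that $\rho$ is globally Lipschitz in $(\chi,\eta)$. Since finitely many Lipschitz coordinates assemble into a Lipschitz map into $\cZ$, whose remaining entries are identically zero, claim (ii) follows at once.

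For the regularity statement, I would invoke the standard result that, when $\zeta\in\dom\cF$ and the inhomogeneity $t\mapsto f(\varphi_\chi(t),\varphi_\eta(t))$ is continuously differentiable, the mild solution is classical in the sense of \eqref{eq:classic}. The required time-regularity of the inhomogeneity follows from the observation that the finite-dimensional components $\varphi_\chi,\varphi_\eta$ are governed by ordinary differential equations with continuous right-hand sides, so that $t\mapsto f(\varphi_\chi(t),\varphi_\eta(t))$ inherits the smoothness needed along solutions emanating from $\dom\cF$, thereby promoting the mild solution to a classical one.
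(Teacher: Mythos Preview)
Your overall strategy coincides with the paper's: show that $\cF$ generates a $\cC_0$-semigroup on $\cZ$, identify solutions in the sense of Definition~\ref{def:Mild} with solutions of the variation-of-constants formula \eqref{eq:IntSemi}, and then invoke the semilinear theory for globally Lipschitz perturbations (\cite[Theorem~1.2, p.~184]{pazy2012semigroups}) to obtain existence, uniqueness, and completeness. The paper merely refers to \cite[Proposition~1]{ferrante2019boundary} for the generation step, whereas you spell out a Lumer--Phillips argument; your sketch there is sound and is a reasonable way to fill in that reference.

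There is, however, a gap in your argument for the classical-regularity part. You propose to obtain \eqref{eq:classic} by showing that the forcing $t\mapsto f(\varphi_\chi(t),\varphi_\eta(t))$ is continuously differentiable, arguing that $\varphi_\chi,\varphi_\eta$ solve ODEs with continuous right-hand sides. This fails on two counts. First, the evolution of $\varphi_\eta$ involves the boundary trace $\varphi_\ep(t)(1)$ through the operator $\cA_O$; for a mere mild solution this trace is not a priori a continuous function of $t$, so you cannot isolate a well-posed ODE for $\varphi_\eta$ without presupposing the very regularity you are trying to prove. Second, even granting $\varphi_\chi,\varphi_\eta\in\cC^1$, the map $\Psi$ is only assumed globally Lipschitz, not differentiable (the dead-zone in Section~\ref{sec:Num_example} is a case in point), so $t\mapsto B\Psi(Z\varphi_\chi(t))$ is in general merely Lipschitz, not $\cC^1$. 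The paper avoids this detour entirely and appeals directly to \cite[Theorem~1.7, p.~190]{pazy2012semigroups}, which upgrades the mild solution to a classical one from the hypotheses $\zeta\in\dom\cF$ and global Lipschitzness of $f$ alone; you should invoke that result (or an equivalent one) rather than trying to manufacture $\cC^1$ time-regularity of the inhomogeneity.
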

\begin{proof}[Sketch of the proof]
\textcolor{blue}{As a first step, notice that $\mathcal{F}$ generates a $\mathcal{C}_0$-semigroup $\mathscr{T}$ on the Hilbert space $\mathcal{Z}$.
This can be proven, e.g., following an approach wholly similar as in the proof of \cite[Proposition 1]{ferrante2019boundary}. Using this fact, it can be easily shown that maximal solutions to \eqref{eq:abstract} correspond to maximal solutions to the following integral equation: 
\begin{equation}
\label{eq:IntSemi}
\varphi(t)=\mathscr{T}(t)\varphi(0)+\int_0^t \mathscr{T}(t-s)f(\varphi_\chi(s), \varphi_\eta(s))ds
\end{equation}
In particular, for all $\zeta\in\mathcal{Z}$, \cite[Theorem 1.2, page 184]{pazy2012semigroups} ensures that there exists a unique maximal solution to  \eqref{eq:IntSemi} such that $\varphi(0)=\zeta$ and that such a solution is complete. 
The last part of the statement, i.e., $\varphi$ satisfies \eqref{eq:classic} when
$\zeta\in\dom\mathcal{F}$, follows directly from \cite[Theorem 1.7, page 190]{pazy2012semigroups}.}
\end{proof}

Now we are in a position to state the problem we solve in this paper. To this end, since the objective is to ensure that the estimation error goes to zero as $t$ approaches infinity, we define the following closed subset of $\mathcal{Z}$:
\begin{equation}
\label{eq:attr}
\mathscr{A}\coloneqq\{(x, \chi, \ep, \eta)\in\mathcal{Z}\colon \ep=0, \eta=0\}
\end{equation} 
and observe that for all $\xi=(x, \chi, \ep, \eta)\in\mathcal{Z}$, one has
\begin{equation}
\label{eq:dist}
\vert \xi\vert_\mathscr{A}=\sqrt{\langle \ep, \ep\rangle_{\Ltwo}+\langle \eta, \eta\rangle_{\R^{n_\chi}}}
\end{equation}
\begin{problem}
\label{prob:ObserverDesign}
Given $A\in\R^{n_\chi\times n_\chi}, C\in\R^{n_x\times n_\chi}$, $M\in\R^{n_y\times n_x}$, $\Lambda\in\Di_+^{n_x}$, $B\in\R^{n_\chi\times n_l}$, $Z\in\R^{n_z\times n_\chi}$ and let $\mathscr{A}$ be defined as in \eqref{eq:attr}.
Design $L\in\R^{n_\chi\times n_y}$ such that each solution $\varphi$ to \eqref{eq:abstract} satisfies
\begin{equation}
\vert \varphi(t)\vert_{\mathscr{A}}\leq \kappa e^{-\lambda t}\vert \varphi(0)\vert_{\mathscr{A}}\quad \forall t\in\dom \varphi
\label{eq:GES}
\end{equation}
for some (solution independent) $\kappa, \lambda\in\R_{>0}$. \hfill$\circ$
\end{problem}
Notice that due to completeness of maximal solutions to \eqref{eq:abstract}, \eqref{eq:GES} ensures that the estimation error approaches zero exponentially along maximal solutions.
\section{Sufficient Conditions for Observer Design}
\label{sec:Main}
In this section we propose sufficient conditions for the solution to Problem~\ref{prob:ObserverDesign}. To this end, let us recall the following result from \cite{ferrante2019boundary} whose role will be clarified later in the proof of Theorem~\ref{theo:theo1}.
\begin{proposition}
Let $P_1\in\Di_+^{n_x}$, $P_2\in\R^{n_\chi \times n_x}$, $P_3\in\Sy_+^{n_\chi}$, and $\mu\in\R$. Define
\begin{equation}
\begin{aligned}
\mathbb{G}\colon&[0,1]\rightarrow\Sy^{n_x+n_\chi}\\
&z\mapsto\left[
\begin{array}{cc}
e^{-\mu z}P_1& P_2\tr\\
\bullet& P_3
\end{array}\right]
\end{aligned}
\label{eq:Gdef}
\end{equation}
For all $(\ep, \eta)\in\dom \mathcal{A}_O$ one has
\begin{equation}
\begin{aligned}
&2\int_0^1 \left\langle\mathbb{G}(z)\begin{bmatrix}\ep(z)\\ \eta\end{bmatrix}, \mathcal{A}_O\begin{bmatrix}\ep(z)\\ \eta\end{bmatrix}\right\rangle_{\R^{n_x+n_\chi}}dz=\\
&\displaystyle{\int_0^1}\begin{bmatrix}\ep(z)\\\ep(1)\\\eta\end{bmatrix}\tr\mathbb{M}(z)\begin{bmatrix}\ep(z)\\\ep(1)\\\eta\end{bmatrix}dz
\end{aligned}
\label{eq:InnerProdV}
\end{equation}
where for all $z\in[0, 1]$, $\mathbb{M}$ is defined in \eqref{eq:Mz} (at the top of the next page). 
\begin{figure*}
\vspace{0.5cm}

\begin{equation}
\scalebox{1}{$\mathbb{M}(z)=
\begin{bmatrix} 
-e^{-\mu z}\mu\Lambda P_1& -P_2\tr LM & P_2\tr A\\
\bullet&-\Lambda P_1 e^{-\mu z}&-\Lambda P_2\tr-M\tr L\tr P_3\\
\bullet& \bullet& \He(P_3A+P_2\Lambda C)+C\tr \Lambda P_1C
\end{bmatrix}$}
\label{eq:Mz}
\end{equation}
\end{figure*}
\label{prop:InnerProdV}
\hfill$\diamond$
\end{proposition}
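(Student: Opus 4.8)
The statement is a computational identity, so the plan is to establish \eqref{eq:InnerProdV} by expanding the integrand on the left, integrating by parts the single term carrying the spatial derivative $\partial_z\ep$, invoking the boundary constraint encoded in $\dom\mathcal{A}_O$, and collecting the outcome into the quadratic form on the right. The $\mathcal{H}^1$-regularity of $\ep$ (recall $(\ep,\eta)\in\mathcal{S}=\Hone\times\R^{n_\chi}$) is what legitimizes both the integration by parts and the pointwise boundary evaluations $\ep(0)$ and $\ep(1)$. First I would substitute the block forms
\[
\mathbb{G}(z)\begin{bmatrix}\ep(z)\\\eta\end{bmatrix}=\begin{bmatrix}e^{-\mu z}P_1\ep(z)+P_2\tr\eta\\ P_2\ep(z)+P_3\eta\end{bmatrix},\qquad \mathcal{A}_O\begin{bmatrix}\ep(z)\\\eta\end{bmatrix}=\begin{bmatrix}-\Lambda\,\partial_z\ep(z)\\ A\eta-LM\ep(1)\end{bmatrix}
\]
into the $\R^{n_x+n_\chi}$ inner product, expanding it into a sum of scalar terms.

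All terms that are free of $\partial_z\ep$ map immediately onto blocks of $\mathbb{M}(z)$: the term $2\ep(z)\tr P_2\tr A\eta$ gives the $(1,3)$ block $P_2\tr A$; the term $-2\ep(z)\tr P_2\tr LM\ep(1)$ gives the $(1,2)$ block $-P_2\tr LM$; the term $-2\eta\tr P_3 LM\ep(1)$, rewritten as $-2\ep(1)\tr M\tr L\tr P_3\eta$, feeds the $(2,3)$ block; and $2\eta\tr P_3A\eta=\eta\tr\He(P_3A)\eta$ contributes to the $(3,3)$ block.

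The crux is the derivative term $-2\int_0^1\langle e^{-\mu z}P_1\ep(z)+P_2\tr\eta,\ \Lambda\,\partial_z\ep(z)\rangle\,dz$, which I would split. For the first piece, because $P_1$ and $\Lambda$ are diagonal the product $P_1\Lambda$ is symmetric, whence
\[
\tfrac{d}{dz}\!\left(e^{-\mu z}\ep(z)\tr P_1\Lambda\ep(z)\right)=-\mu e^{-\mu z}\ep(z)\tr P_1\Lambda\ep(z)+2e^{-\mu z}\ep(z)\tr P_1\Lambda\,\partial_z\ep(z).
\]
Integrating over $[0,1]$ produces the interior term $-\mu\int_0^1 e^{-\mu z}\ep(z)\tr\Lambda P_1\ep(z)\,dz$ (the $(1,1)$ block $-e^{-\mu z}\mu\Lambda P_1$) plus the boundary contribution $-e^{-\mu}\ep(1)\tr\Lambda P_1\ep(1)+\ep(0)\tr\Lambda P_1\ep(0)$. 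The second piece is elementary since $\eta$ is constant in $z$: $-2\int_0^1\eta\tr P_2\Lambda\,\partial_z\ep(z)\,dz=-2\eta\tr P_2\Lambda(\ep(1)-\ep(0))$. I then invoke $(\ep,\eta)\in\dom\mathcal{A}_O=\mathcal{D}_2$, i.e.\ $\ep(0)=C\eta$, to eliminate the $z=0$ values: $\ep(0)\tr\Lambda P_1\ep(0)=\eta\tr C\tr\Lambda P_1C\eta$ supplies the $C\tr\Lambda P_1C$ summand of the $(3,3)$ block, while $2\eta\tr P_2\Lambda\ep(0)=2\eta\tr P_2\Lambda C\eta=\eta\tr\He(P_2\Lambda C)\eta$ completes the $\He(P_3A+P_2\Lambda C)$ part of the $(3,3)$ block. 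The $z=1$ boundary term $-e^{-\mu}\ep(1)\tr\Lambda P_1\ep(1)$ supplies the $(2,2)$ block, with the exponential weight frozen at the right endpoint (the constant matrix $e^{-\mu}\Lambda P_1$), and $-2\eta\tr P_2\Lambda\ep(1)=-2\ep(1)\tr\Lambda P_2\tr\eta$ completes the $(2,3)$ block $-\Lambda P_2\tr-M\tr L\tr P_3$.

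The main obstacle is purely bookkeeping rather than conceptual: one must track the two boundary terms generated by the integration by parts and route them correctly, recognizing in particular that the $z=1$ value produces the constant factor $e^{-\mu}$ governing the $(2,2)$ block, whereas the $z=0$ value feeds—through the constraint $\ep(0)=C\eta$—both the cross term $\He(P_2\Lambda C)$ and the diagonal term $C\tr\Lambda P_1C$ in the $(3,3)$ block. Everything else is a matter of symmetrizing each scalar (using $v\tr Nw=(v\tr Nw)\tr$ and the symmetry of $P_3$ and $\Lambda$) so that the collected terms coincide block-by-block with $\mathbb{M}(z)$ in \eqref{eq:Mz}; I emphasize that the symmetry of $P_1\Lambda$, guaranteed by the diagonal structure of $P_1\in\Di_+^{n_x}$ and $\Lambda\in\Di_+^{n_x}$, is precisely what lets the integration by parts collapse into a single clean total derivative and is thus essential to the whole reduction.
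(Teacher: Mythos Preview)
Your computation is correct and follows the natural route: expand the bilinear form blockwise, integrate by parts the sole term carrying $\partial_z\ep$ (exploiting that $P_1\Lambda$ is symmetric because both factors are diagonal), and eliminate $\ep(0)$ via the domain constraint $\ep(0)=C\eta$ before collecting everything into the quadratic form in $(\ep(z),\ep(1),\eta)$. The paper itself does not prove this proposition; it merely recalls it from \cite{ferrante2019boundary}, so there is no in-paper argument to compare against, and your derivation is exactly the standard one that underlies such results.

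One point worth making explicit: your computation actually corrects a typo in the paper's display \eqref{eq:Mz}. The boundary term at $z=1$ from the integration by parts is the constant $-e^{-\mu}\ep(1)\tr\Lambda P_1\ep(1)$, so the $(2,2)$ block of $\mathbb{M}(z)$ should be the constant matrix $-e^{-\mu}\Lambda P_1$, not the $z$-dependent $-e^{-\mu z}\Lambda P_1$ printed there (indeed $\int_0^1 e^{-\mu z}dz\neq e^{-\mu}$ in general, so the identity fails with the printed block). You flagged this yourself (``the exponential weight frozen at the right endpoint''). This is harmless downstream: in the proof of Theorem~\ref{theo:theo1} the authors only use the upper bound $\mathbb{Q}(z)\preceq\mathbb{K}$, and the $(2,2)$ block of $\mathbb{K}$ in \eqref{eq:Kappa} is already $-\Lambda P_1 e^{-\mu}$, which is what both the correct constant block and the (incorrect) $z$-dependent block bound from above on $[0,1]$.
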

The following fact somehow extends Proposition~\ref{prop:InnerProdV} to the setting considered in this paper.
\begin{fact}
Let $P_1\in\Di_+^{n_x}$, $P_2\in\R^{n_\chi \times n_x}$, $P_3\in\Sy_+^{n_\chi}$, $\mu\in\R$, and $\rho\colon\R^{2n_\chi}\rightarrow\R^{n_l}$ be defined as in \eqref{eq:rhodef}. Then, for all $(\ep,\chi,\eta)\in\Ltwo\times\R^{2n_\chi}$ one has:
\begin{equation}
\begin{aligned}
&2\int_0^1 \left\langle\mathbb{G}(z)\begin{bmatrix}\ep(z)\\\eta\end{bmatrix}, \begin{bmatrix}0\\B\rho(\chi, \eta)\end{bmatrix}\right\rangle_{\R^{n_x+n_\chi}}dz=\\
&\displaystyle{\int_0^1}\begin{bmatrix}\ep(z)\\\eta\\\rho(\chi, \eta)
\end{bmatrix}\tr
\begin{bmatrix}
0&0&P_2\tr B\\
\bullet&0&P_3B\\
\bullet&\bullet&0
\end{bmatrix}
\begin{bmatrix}\ep(z)\\ \eta\\\rho(\chi, \eta)\end{bmatrix}dz
\end{aligned}
\label{eq:InnerProdV}
\end{equation}
\label{fact:FactInner}
\hfill$\diamond$
\end{fact}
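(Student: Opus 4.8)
The plan is to verify the identity by directly computing both sides and observing that they reduce to the same integrand; since no spatial derivatives of $\ep$ enter, this is purely algebraic and requires no integration by parts (unlike Proposition~\ref{prop:InnerProdV}), which is why the claim holds for every $\ep\in\Ltwo$ rather than only on $\dom\mathcal{A}_O$. Throughout, I would treat $\eta$ and $\rho(\chi,\eta)$ as fixed vectors independent of the spatial variable $z$.

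First I would expand the matrix--vector product appearing on the left-hand side. Using the definition of $\mathbb{G}$ in \eqref{eq:Gdef} (whose implicit symmetric block is $P_2$),
\begin{equation*}
\mathbb{G}(z)\begin{bmatrix}\ep(z)\\ \eta\end{bmatrix}=\begin{bmatrix} e^{-\mu z}P_1\ep(z)+P_2\tr\eta\\ P_2\ep(z)+P_3\eta\end{bmatrix}.
\end{equation*}
Pairing this with $\begin{bmatrix}0\\ B\rho(\chi,\eta)\end{bmatrix}$ in $\langle\cdot,\cdot\rangle_{\R^{n_x+n_\chi}}$, the first block annihilates against the zero vector, so only the second block contributes. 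Using $P_3\tr=P_3$, the integrand of the left-hand side becomes
\begin{equation*}
\bigl\langle P_2\ep(z)+P_3\eta,\,B\rho(\chi,\eta)\bigr\rangle_{\R^{n_\chi}}=\ep(z)\tr P_2\tr B\rho(\chi,\eta)+\eta\tr P_3 B\rho(\chi,\eta),
\end{equation*}
and multiplying by $2$ and integrating over $[0,1]$ gives the left-hand side in closed form.

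Next I would expand the quadratic form on the right-hand side. Writing the block-symmetric matrix in full and applying the standard expansion of $v\tr N v$ for symmetric $N$, every diagonal block and the $(1,2)$ block vanish, so only the two cross terms associated with the blocks $P_2\tr B$ and $P_3 B$ survive; these give exactly $2\ep(z)\tr P_2\tr B\rho(\chi,\eta)+2\eta\tr P_3 B\rho(\chi,\eta)$, i.e.\ twice the integrand obtained above. Integrating over $[0,1]$ then yields the asserted equality. There is no genuine obstacle in this argument; the only points requiring care are correctly identifying the implicit symmetric off-diagonal blocks (the $\bullet$ entries) and exploiting the symmetry of $P_3$, after which the two sides match term by term.
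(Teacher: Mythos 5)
Your proof is correct: the identity is purely algebraic (no integration by parts is needed, which is exactly why it holds for all $\ep\in\Ltwo$ rather than only on $\dom\mathcal{A}_O$), and your block-wise expansion of both sides matches term by term. The paper itself states this Fact without proof, treating it as an immediate computation, and your direct verification is precisely the intended argument.
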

\begin{theorem}
\label{theo:theo1}
Assume that there exist $P_1\in\Di_+^{n_x}$, $P_2\in\R^{n_\chi\times n_x}$, $P_3\in\Sy^{n_\chi}$, $L\in\R^{n_\chi\times n_y}$, $\textcolor{blue}{\iota\geq 0}$, and $\mu\in\R_{>0}$ such that:
\begin{align}
& \left[
\begin{array}{cc}
P_1e^{-\mu}&P_2\tr\\
\bullet&P_3
\end{array}
\right]\succ 0
\label{eq:LMI1}\\
&\mathbb{K}\prec 0
\label{eq:LMI2}
\end{align}
where the matrix $\mathbb{K}$ is defined in \eqref{eq:Kappa}. Then, any maximal solution to \eqref{eq:abstract} satisfies
\eqref{eq:GES} with
\begin{equation}
\begin{aligned}
&\lambda=\frac{\alpha_3}{2\alpha_2},\quad \kappa=
\sqrt{\frac{\alpha_2}{\alpha_1}}
\end{aligned}
\label{eq:dissipation_quadratic}
\end{equation}
where 
\begin{equation}
\begin{aligned}
&\alpha_1\coloneqq\lambda_{\min}\left(\begin{bmatrix}P_1 e^{-\mu}&P_2\tr\\\bullet&P_3\end{bmatrix}\right),
&\alpha_2\coloneqq\lambda_{\max}\left(\begin{bmatrix}P_1&P_2\tr\\\bullet&P_3\end{bmatrix}\right)\\
&\alpha_3\coloneqq\vert\lambda_{\max}(\mathbb{K})\vert
\end{aligned}
\label{eq:alpha_12}
\end{equation}
\end{theorem}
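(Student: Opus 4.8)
The plan is to use the weight $\mathbb{G}$ of \eqref{eq:Gdef}, restricted to the error coordinates, as the kernel of an integral Lyapunov functional
\[
V(\ep,\eta)\coloneqq\int_0^1\left\langle\mathbb{G}(z)\begin{bmatrix}\ep(z)\\\eta\end{bmatrix},\begin{bmatrix}\ep(z)\\\eta\end{bmatrix}\right\rangle_{\R^{n_x+n_\chi}}dz,
\]
and to show that it decays exponentially. First I would establish the sandwich bounds $\vert\xi\vert_{\mathscr{A}}^2\alpha_1\leq V\leq\alpha_2\vert\xi\vert_{\mathscr{A}}^2$ with $\alpha_1,\alpha_2$ from \eqref{eq:alpha_12}. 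Since $\mu>0$, one has $e^{-\mu}\leq e^{-\mu z}\leq 1$ for all $z\in[0,1]$; as the off-diagonal and lower-right blocks of $\mathbb{G}(z)$ do not depend on $z$, the matrix $\mathbb{G}(z)$ is squeezed between the constant matrix of \eqref{eq:LMI1} (obtained by putting $e^{-\mu z}=e^{-\mu}$) from below and $\begin{bmatrix}P_1&P_2\tr\\\bullet&P_3\end{bmatrix}$ (obtained by putting $e^{-\mu z}=1$) from above, the two discrepancies being positive-semidefinite perturbations of the $(1,1)$ block alone. Condition \eqref{eq:LMI1} makes $\alpha_1>0$, and integrating these pointwise inequalities together with $\int_0^1\vert\eta\vert^2\,dz=\vert\eta\vert^2$ gives the bounds via \eqref{eq:dist}.

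Next I would differentiate $V$ along a classical solution, which exists for every $\zeta\in\dom\mathcal{F}$ by Proposition~\ref{prop:exist}. From \eqref{eq:abstract} the error coordinates obey $\begin{bmatrix}\dot\ep\\\dot\eta\end{bmatrix}=\mathcal{A}_O\begin{bmatrix}\ep\\\eta\end{bmatrix}+\begin{bmatrix}0\\B\rho(\chi,\eta)\end{bmatrix}$, so $\dot V$ splits into a linear part, evaluated through Proposition~\ref{prop:InnerProdV}, and a nonlinear part, evaluated through Fact~\ref{fact:FactInner}. Summing the two contributions writes $\dot V=\int_0^1 w(z)\tr\Theta(z)\,w(z)\,dz$ in the augmented variable $w(z)\coloneqq(\ep(z),\ep(1),\eta,\rho(\chi,\eta))$, where $\Theta(z)$ collects $\mathbb{M}(z)$ of \eqref{eq:Mz} together with the coupling block furnished by Fact~\ref{fact:FactInner}.

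To render this sign-definite I would invoke the S-procedure against the Lipschitz constraint, which by \eqref{eq:rhoBound} holds uniformly in $\chi$; this lets me treat $\rho$ as an independent coordinate and keeps the plant state out of the estimate. Because \eqref{eq:rhoBound} gives $\iota(\ell^2\eta\tr Z\tr Z\eta-\rho\tr\rho)\geq 0$ for every $\iota\geq 0$, adding this nonnegative term to the integrand produces the upper bound $\dot V\leq\int_0^1 w(z)\tr\mathbb{K}\,w(z)\,dz$, where the matrix $\mathbb{K}$ of \eqref{eq:Kappa} carries the extra $+\iota\ell^2 Z\tr Z$ in the $\eta$-block and $-\iota\Id$ in the $\rho$-block, and where the residual $z$-dependence of $\mathbb{M}(z)$ has been removed by bounding its two negative-definite blocks $-e^{-\mu z}\mu\Lambda P_1$ and $-e^{-\mu z}\Lambda P_1$ from above by their least-negative values at $e^{-\mu z}=e^{-\mu}$. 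Hypothesis \eqref{eq:LMI2} then yields $w\tr\mathbb{K}w\leq\lambda_{\max}(\mathbb{K})\vert w\vert^2=-\alpha_3\vert w\vert^2$; discarding the nonnegative $\vert\ep(1)\vert^2$ and $\vert\rho\vert^2$ contributions in $\vert w\vert^2$ leaves $\dot V\leq-\alpha_3(\norm{\ep}_{\Ltwo}^2+\vert\eta\vert^2)=-\alpha_3\vert\xi\vert_{\mathscr{A}}^2$. Combining with the upper sandwich bound gives $\dot V\leq-(\alpha_3/\alpha_2)V$, and a Gr\"onwall argument together with the lower sandwich bound delivers \eqref{eq:GES} with $\kappa,\lambda$ as in \eqref{eq:dissipation_quadratic}.

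The main obstacle is that $V$ is generally not differentiable along a mere mild solution, so the computation above is licensed only when $\zeta\in\dom\mathcal{F}$. I would close this gap by a density argument: $\dom\mathcal{F}$ is dense in $\mathcal{Z}$, the estimate \eqref{eq:GES} holds for all classical solutions with solution-independent $\kappa,\lambda$, and mild solutions depend continuously on their initial datum through the variation-of-constants representation \eqref{eq:IntSemi}; passing to the limit propagates \eqref{eq:GES} to every maximal solution, each of which is complete by Proposition~\ref{prop:exist}.
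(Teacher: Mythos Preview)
Your proposal is correct and follows essentially the same approach as the paper's proof: the same integral Lyapunov functional $V$ with kernel $\mathbb{G}$, the same sandwich bounds from \eqref{eq:LMI1}, the same computation of $\dot V$ via Proposition~\ref{prop:InnerProdV} and Fact~\ref{fact:FactInner}, the same S-procedure step against \eqref{eq:rhoBound}, the same reduction $\mathbb{Q}(z)\preceq\mathbb{K}$ by monotonicity of $e^{-\mu z}$, and the same density argument to pass from classical to mild solutions. If anything, you spell out a few steps (the pointwise sandwich on $\mathbb{G}(z)$ and the discarding of the $\vert\ep(1)\vert^2$, $\vert\rho\vert^2$ terms) more explicitly than the paper does.
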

\begin{proof}[Sketch of the proof]
Let $\mathbb{G}$ be defined as in \eqref{eq:Gdef}.
Consider the following Lyapunov functional candidate defined on $\Ltwo\times\R^{n_\chi}$:
\begin{equation}\label{eq:V}
V(\ep,\varphi)\coloneqq\displaystyle\int_0^1\left\langle\begin{bmatrix}\ep(z)\\ \varphi\end{bmatrix}, \mathbb{G}(z)\begin{bmatrix}\ep(z)\\ \varphi\end{bmatrix}\right\rangle_{\R^{n_x+n_\chi}} dz
\end{equation}
In particular, observe that for all $(x,\chi, \ep, \varphi)\in\mathcal{Z}$ one has
\begin{equation}
\label{eq:sandwhich}
\alpha_1 \vert (x,\chi, \ep, \varphi)\vert^2_\mathscr{A}
\leq V(\ep,\eta)\leq \alpha_2 \vert (x,\chi, \ep, \varphi)\vert^2_\mathscr{A}
\end{equation}
where $\alpha_1$ and $\alpha_2$ are strictly positive thanks to \eqref{eq:LMI1}. 
We show that the above results holds true for any classical solution to \eqref{eq:abstract}. The extension of the proof to mild solutions is discussed briefly at the end. 
Assume that $\varphi(0)\in\dom\mathcal{F}$. Then, from Proposition~\ref{prop:exist}: 
$\varphi\in\mathcal{C}^1(\Int\dom\varphi, \mathcal{Z})$, $\varphi(t)\in\dom\mathcal{F}$ for all $t\in\Int\dom\varphi$, and
\begin{equation}
\dot{\varphi}(t)=\mathcal{F}\varphi(t)+f(\varphi_\chi(t), \varphi_\eta(t))\quad \forall t\in\Int\dom\varphi
\label{eq:C1_sol}
\end{equation}
For all $t\in\Int\dom\varphi$, $\varphi$ being differentiable, one has
$$
\begin{aligned}
&\dot{V}(t)\coloneqq\frac{d}{dt}V(\varphi_\ep(t), \varphi_\eta(t))=DV(\varphi_\ep(t), \varphi_\eta(t))\begin{bmatrix}
\dot{\varphi}_\ep(t)\\ \dot{\varphi}_\eta(t)\end{bmatrix}=\\
&2\int_0^1\left\langle\mathbb{G}(z)\begin{bmatrix}
(\varphi_\ep(t))(z)\\ \varphi_\eta(t)\end{bmatrix}, \begin{bmatrix} (y_{\ep}(t))(z)\\  y_\eta(t)\end{bmatrix}\right\rangle_{\R^{n_x+n_\chi}} dz
\end{aligned}
$$
where for convenience we denoted: 
$$
\Ltwo\times\R^{n_\chi}\ni(y_{\ep}, y_\eta)\coloneqq\mathcal{A}_O\begin{bmatrix}\varphi_\ep(t)\\ \varphi_\eta(t)\end{bmatrix}
$$
Hence, by using Proposition~\ref{prop:InnerProdV} and Fact~\ref{fact:FactInner}, it follows that for all $t\in\Int\dom\varphi$
$$
\dot{V}(t)=\displaystyle{\int_0^1}\begin{bmatrix}(\varphi_\ep(t))(z)\\ (\varphi_\ep(t))(1)\\\varphi_\eta(t)\\
\rho(\varphi_\chi(t), \varphi_\eta(t))
\end{bmatrix}\tr\mathbb{H}(z)\begin{bmatrix}(\varphi_\ep(t))(z)\\ (\varphi_\ep(t))(1)\\\varphi_\eta(t)\\
\rho(\varphi_\chi(t), \varphi_\eta(t))
\end{bmatrix}dz
$$
where for all $z\in[0, 1]$ 
$$
\mathbb{H}(z)\coloneqq 
\begin{bmatrix}
\mathbb{M}(z)&\Gamma\\
\bullet&0\end{bmatrix}
$$
with 
$$
\Gamma\coloneqq\begin{bmatrix}P_2\tr B\\0\\P_3B\end{bmatrix}
$$
At this stage notice that in view of \eqref{eq:rhoBound}, for all $\iota\geq 0$ and all $t\in\Int\dom\varphi$, the following bound holds:
$$
\dot{V}(t)\leq \displaystyle{\int_0^1}\begin{bmatrix}(\varphi_\ep(t))(z)\\ (\varphi_\ep(t))(1)\\\varphi_\eta(t)\\
\rho(\varphi_\chi(t), \varphi_\eta(t))
\end{bmatrix}\tr\mathbb{Q}(z)\begin{bmatrix}(\varphi_\ep(t))(z)\\ (\varphi_\ep(t))(1)\\\varphi_\eta(t)\\
\rho(\varphi_\chi(t), \varphi_\eta(t))
\end{bmatrix}dz
$$ 
where for all $z\in[0, 1]$ 
$$
\mathbb{Q}(z)\coloneqq \begin{bmatrix}
\mathbb{M}(z)+\iota\ell\begin{bmatrix}
0\\0\\Z\tr\end{bmatrix}\begin{bmatrix}
0&0&Z\end{bmatrix}&\Gamma\\
\bullet&-\iota\Id\end{bmatrix} 
$$
Thus, since for all $z\in[0, 1]$, $\mathbb{Q}(z)\preceq\mathbb{K}$, by using \eqref{eq:LMI2} one has that for all $t\in\Int\dom\varphi $
$$
\begin{aligned}
\dot{V}(t)\leq &-\alpha_3 \int_0^1\left\vert \begin{bmatrix} (\varphi_\ep(t))(z)\tr&
(\varphi_\ep(t))(1)\tr&\varphi_\eta(t)\tr\end{bmatrix}\tr\right\vert^2dz
\\
\leq &-\alpha_3\vert \varphi(t)\vert^2_\mathscr{A} 
\end{aligned}
$$
the latter, thanks to \eqref{eq:sandwhich}, by standard manipulations yields for all $t\in\dom \varphi $
\begin{equation}
\label{eq:GES_bound_proof}
\vert \varphi(t)\vert_\mathscr{A} \leq \sqrt{\frac{\alpha_2}{\alpha_1}}e^{-\frac{\alpha_3}{2\alpha_2}t}\vert \varphi(0)\vert_\mathscr{A} 
\end{equation}
which reads as \eqref{eq:GES} with $\displaystyle\lambda=\frac{\alpha_3}{2\alpha_2}$ and $\kappa=\sqrt{\frac{\alpha_2}{\alpha_1}}$.  
The same bound can be established for solutions, defined as in Definition~\ref{def:Mild}, by relying on a density argument and on \eqref{eq:IntSemi}.
\end{proof}
\begin{figure*}
\vspace{0.2cm}

\begin{equation}
\scalebox{1}{$\mathbb{K}=
\begin{bmatrix} 
-e^{-\mu}\mu\Lambda P_1& -P_2\tr LM & P_2\tr A&P_2\tr B\\
\bullet&-\Lambda P_1 e^{-\mu }&-\Lambda P_2\tr-M\tr L\tr P_3&0\\
\bullet& \bullet& \He(P_3A+P_2\Lambda C)+C\tr \Lambda P_1C+\iota\ell^2 Z\tr Z&P_3B\\
\bullet&\bullet&\bullet&-\iota\Id
\end{bmatrix}$}
\label{eq:Kappa}
\end{equation}
\end{figure*}
\begin{remark}
Condition \eqref{eq:LMI2} is a bilinear matrix inequality (\emph{BMI}). As such, the numerical solution to \eqref{eq:LMI2} can be challenging when the size of the unknown matrices gets large. This is of course a limitation of the proposed approach and we are currently working to overcome this drawback. On the other hand, it is worth stressing that, as opposed to \cite{castillo2013boundary}, where conditions in the form of linear matrix inequalities are given for the design of observer \eqref{eq:observer}, the feasibility of \eqref{eq:LMI2} does not require asymptotic stability of the ODE dynamics. This aspect is connected to the selection of the non-block diagonal structure of the Lyapunov functional \eqref{eq:V}. Indeed, although such a functional enables to relax the assumptions on the ODE dynamics, it introduces some bilinear terms in \eqref{eq:LMI2}. 
\end{remark}
\section{Numerical Example}
\label{sec:Num_example}
Consider a system of the form \eqref{eq:hyp_PDEs_3} defined by the following data
$$
\begin{aligned}
&\Lambda=\begin{bmatrix}
\frac{3}{2}&0\\
0&2\\
\end{bmatrix}, A=\left[
\begin{array}{cr}
-1 & 2\\
2.05&-4
\end{array}
\right], B=\begin{bmatrix}
0\\\frac{1}{2}
\end{bmatrix}\\ 
&Z=\begin{bmatrix}
1&1
\end{bmatrix}, C=\Id,\quad M=\begin{bmatrix}
1&1
\end{bmatrix}, \Psi(z)=\dz(z)
\end{aligned}
$$
where for any $z\in\R$, the function $\dz\colon\R\rightarrow\R$ is defined as $\dz(z)=0$ if $\vert z\vert\leq 1$ and $\dz(z)=\Sign(z)(\vert z\vert-1)$ otherwise. Obviously, $\dz$ is Lipschitz continuous and, in particular, $\ell=1$.  
It is worth stressing that, while the PDE dynamics are stable, the boundary dynamics  are not, i.e. the matrix $A$ is not Hurwitz. Let us now design an observer according to the structure \eqref{eq:observer}. With the objective of searching for a feasible solution to \eqref{eq:LMI1}--\eqref{eq:LMI2}, we employed an heuristic algorithm wholly similar to \cite[Algorithm 1]{ferrante2019boundary}. In particular, for this example a feasible solution to \eqref{eq:LMI1}--\eqref{eq:LMI2} is as follows:
$$
\begin{array}{lll}
&P_1=\left[\begin{array}{cc} 11.76 & 0\\ 0 & 16.24 \end{array}\right]&P_2=\left[\begin{array}{cc} -6.904 & -7.157\\ -4.254 & -2.427 \end{array}\right]\\
&P_3=\left[\begin{array}{cc} 14.4 & 4.976\\ 4.976 & 7.52 \end{array}\right]&L=\left[\begin{array}{c} 0.4593\\ 0.2025\end{array}\right]\\
&\mu=0.4&\iota=3.335
\end{array}
$$
To validate our theoretical findings, next we present some simulations of the proposed observer\footnote{Numerical integration of hyperbolic PDEs is performed via the use of the Lax-Friedrichs (Shampine's two-step variant) scheme implemented in Matlab\textsuperscript{\tiny\textregistered} by Shampine \cite{shampine2005solving}.}.

In \figurename~\ref{fig:Lyapunov}, we report the evolution of the Lyapunov functional \eqref{eq:V}  (a logarithmic scale is employed on the $y$-axis) along the solution to \eqref{eq:abstract} from the following initial condition:
\begin{equation}
\begin{array}{lll}
&x_1(0, z)=\cos(2\pi z)&\forall z\in[0, 1]\\
&x_2(0, z)=-2\cos(4\pi z)&\forall z\in[0, 1]\\
&\chi(0)=(1, -2)\\
&\hat{x}(0, z)=0&\forall z\in[0, 1]\\
&\hat{\chi}(0)=(0, 0)
\end{array}
\label{eq:init_cond}
\end{equation}
\begin{figure}[!h]
\centering
\vspace{0.5cm}
\psfrag{z}[][][1]{$z$}
\psfrag{t}[][][1]{$t$}
\includegraphics[scale=0.4, trim={0.5cm 0cm 0.5cm 0.5cm},clip]{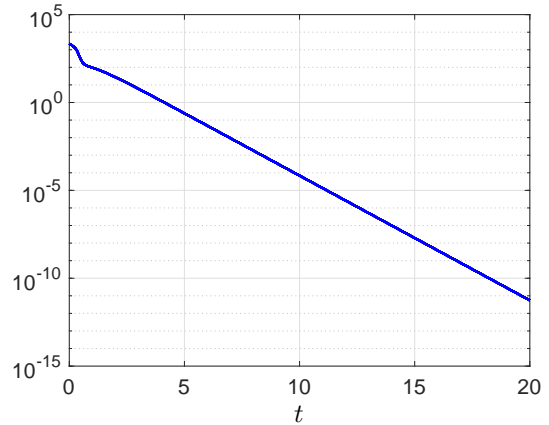}
\caption{Evolution of the Lyapunov functional \eqref{eq:V} (on a log-linear scale) along the solution to \eqref{eq:abstract} from the initial condition in \eqref{eq:init_cond}.}
\label{fig:Lyapunov}
\end{figure}
\figurename~\ref{fig:Lyapunov} clearly shows that $V$ converges exponentially to zero. Exponential state reconstruction is confirmed by \figurename~\ref{fig:error_PDE_example_3D}, and \figurename~\ref{fig:error_ODE_example} where the evolution of $\ep$ and $\eta$, respectively, are reported. 
\begin{figure}[!h]
\vspace{0.5cm}

\centering
\psfrag{e1}[][][1]{$\varepsilon_1$}
\psfrag{e2}[][][1]{$\varepsilon_2$}
\psfrag{t}[][][1]{$t$}
\psfrag{z}[][][1]{$z$}
\includegraphics[scale=0.4, trim={1.1cm 0.5cm 1.5cm 1.1cm},clip]{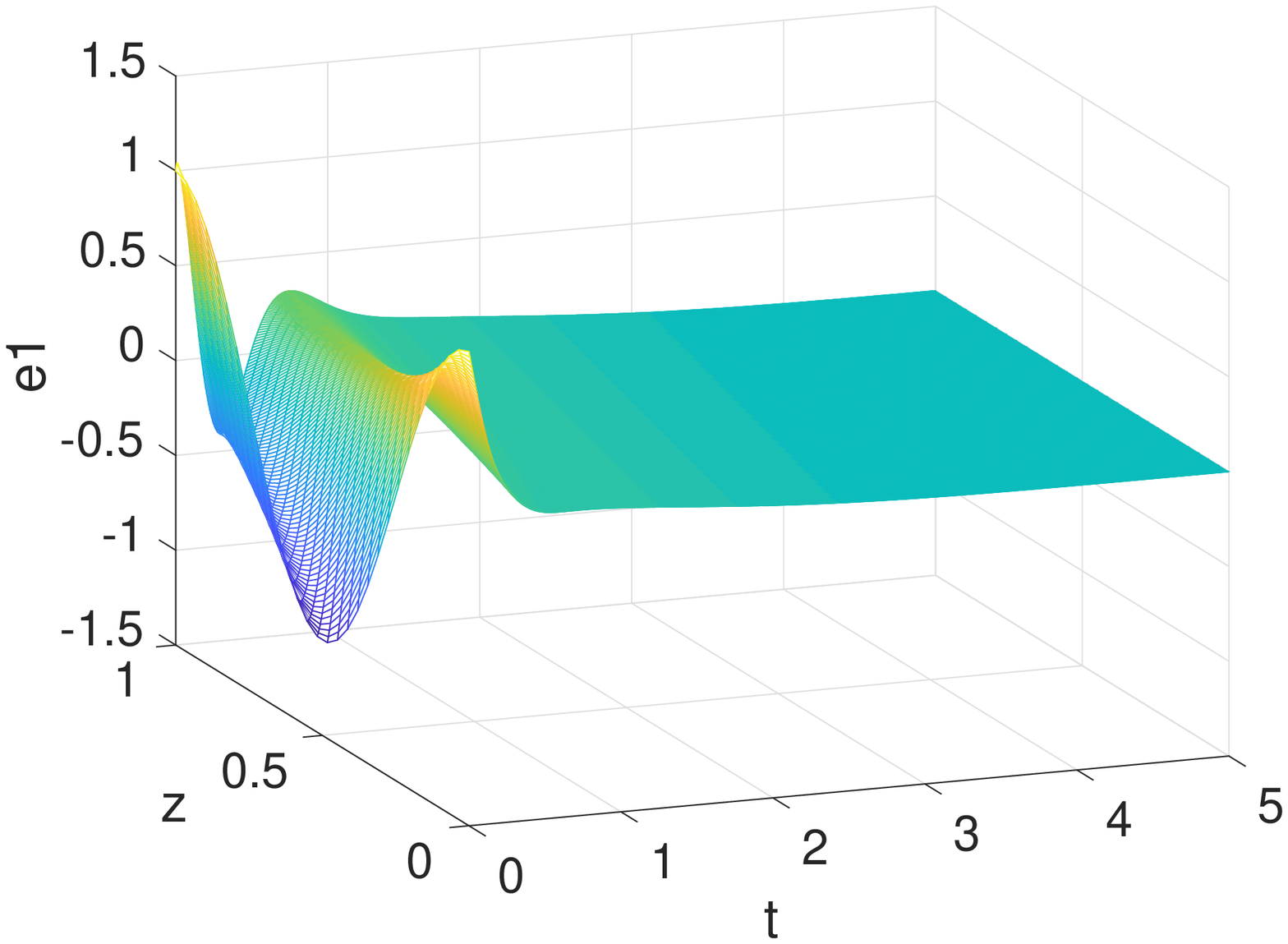}
\vspace{0.5cm}

\includegraphics[scale=0.4, trim={1.2cm 0.5cm 1.5cm 1.1cm},clip]{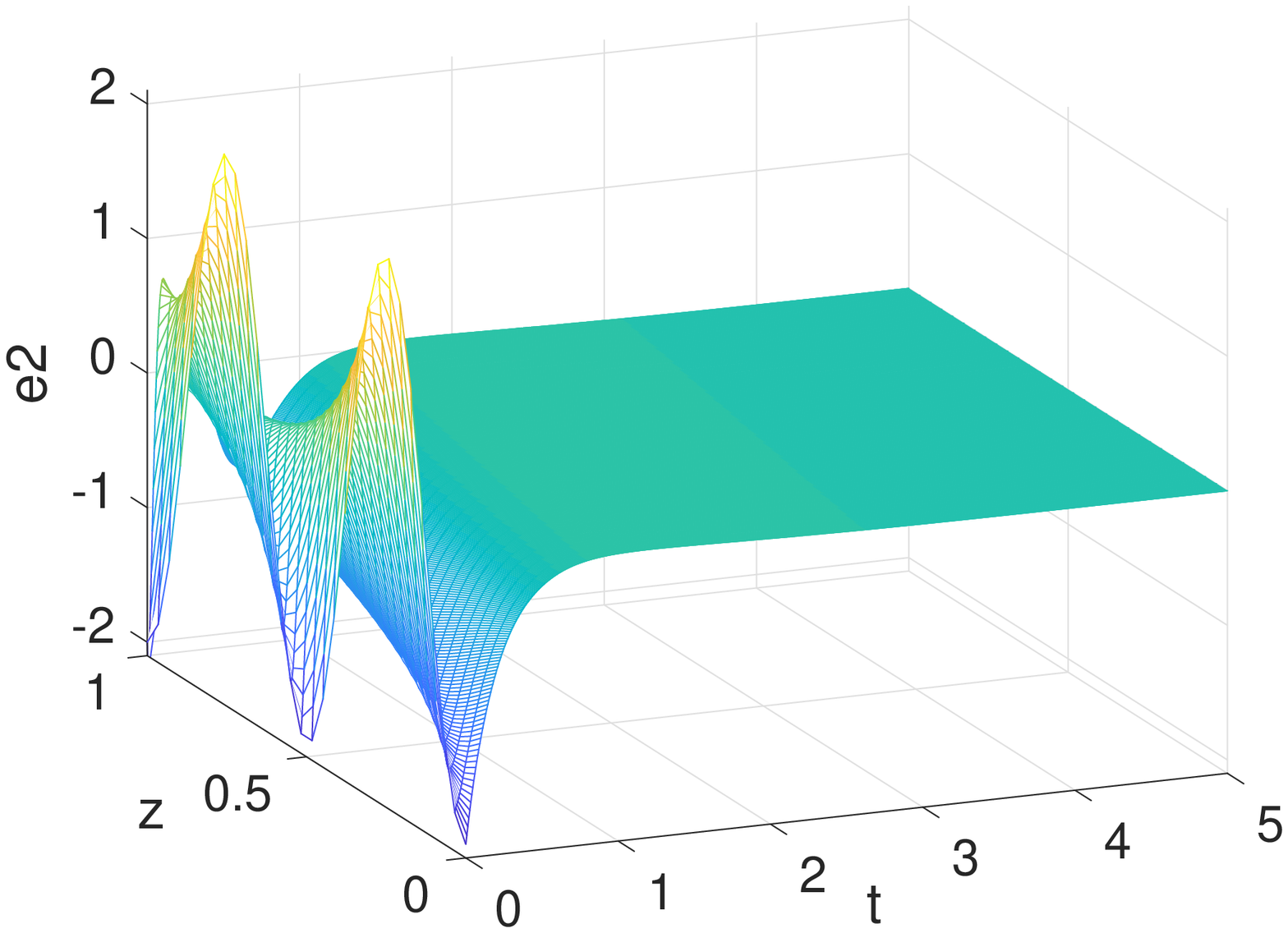}
\caption{Evolution of the error $\ep$ from the initial condition in \eqref{eq:init_cond}.}
\label{fig:error_PDE_example_3D}
\end{figure}
\begin{figure}[!h]
\centering
\psfrag{eta1}[][][1]{$\eta_1$}
\psfrag{eta2}[][][1]{$\eta_2$}
\psfrag{t}[][][1]{$t$}
\includegraphics[scale=0.43, trim={1cm 0cm 0cm 0cm},clip]{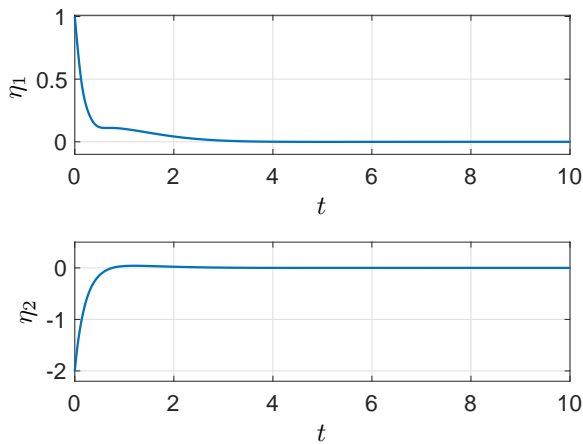}
\caption{Evolution of the error $\eta$ from the initial condition in \eqref{eq:init_cond}.}
\label{fig:error_ODE_example}
\end{figure}
\section{Conclusion}\label{sec:Conclusions}	
In this paper, we considered the problem of designing an observer to estimate the state 
of a system of linear conservation laws with Lipschitz nonlinear boundary dynamics. The observer we propose is a copy of the plant augmented with a linear output injection term. 
The interconnection of the plant and the observer is analyzed via abstract differential equations tools. The observer is designed to induce global exponential stability of a closed set in which the estimation error is equal to zero.
By pursuing a Lyapunov approach, the observer design problem is recast into the feasibility problem of some bilinear matrix inequalities. Numerical simulations are used to illustrate the effectiveness of the proposed observer design strategy in an example. 

Future research directions include the derivation of computationally affordable design algorithms for the observer based on linear matrix inequalities. 
\balance
\bibliographystyle{plain}
\bibliography{biblio}
\end{document}